\newcommand{\interv}[2]{\{ #1,\dots,#2 \}}
\newcommand{\emp}{\mathtt{emp}}
\newcommand{\ar}[1]{\#(#1)}
\newcommand{\fv}[1]{\mathit{fv}(#1)}
\newcommand{\dom}[1]{\mathit{dom}(#1)}
\newcommand{\img}[1]{\mathit{img}(#1)}
\newcommand{\dunion}{\uplus}
\newcommand{\vars}{{\cal V}}
\newcommand{\size}[1]{\mathit{size}(#1)}
\newcommand{\card}[1]{\mathit{card}(#1)}
\newcommand{\len}[1]{\|#1\|}
\newcommand{\bigO}{\mathcal{O}}
\renewcommand{\vec}[1]{\mathbf{#1}}
\newcommand{\repl}[3]{#1\{ #2 \leftarrow #3 \}}
\newcommand{\replall}[4]{#1\{ #2 \leftarrow #3 \mid #4 \}}
\newcommand{\myset}[1]{\left\{#1\right\}}
\newcommand{\pcSID}{pc-SID\xspace}
\newcommand{\pcSIDs}{pc-SIDs\xspace}
\newcommand{\iseq}{\approx}
\newcommand{\Loc}{{\cal L}}
\newcommand{\astore}{\mathfrak{s}}
\newcommand{\aheap}{\mathfrak{h}}
\newcommand{\unfoldto}[1]{\Leftarrow_{#1}}
\newcommand{\preds}{{\cal P}_S}
\newcommand{\asid}{{\cal R}}
\newcommand{\rank}{\kappa}
\newcommand{\alloccompatible}{$\allocf$-compatible\xspace}
\newcommand{\locs}[1]{\mathit{loc}(#1)}
\newcommand{\bigAnd}{\scaleobj{1.5}{*}}
\newcommand{\modelst}{\models_{\theory}}
\newcommand{\modelsr}{\models_{\asid}}
\newcommand{\equivr}{\equiv_{\asid}}
\newcommand{\modelssid}[1]{\models_{#1}}
\newcommand{\allocf}{\mathit{alloc}}
\newcommand{\alloc}[1]{\allocf(#1)}
\newcommand{\vdashr}{\vdash_{\asid}}
\newcommand{\vdashsid}[1]{\vdash_{#1}}
\newcommand{\theory}{{\cal T}}
\newcommand{\tformula}{$\theory$-formula\xspace}
\newcommand{\tatom}{$\theory$-atom\xspace}
\newcommand{\slformula}{$SL$-formula\xspace}
\newcommand{\tpreds}{{\cal P}_{\cal T}}
\newcommand{\widt}[1]{\mathit{width}(#1)}
\newcommand{\aform}{\phi}
\newcommand{\aformB}{\psi}
\newcommand{\aformC}{\gamma}
\newcommand{\atform}{\chi}
\newcommand{\atformB}{\xi}
\newcommand{\anatom}{\alpha}
\newcommand{\aseq}{\Gamma}
\newcommand{\nil}{\mathrm{nil}}
\newcommand{\myfalse}{\mathtt{false}}
\begin{document}

\title{Two Results on Separation Logic 
With Theory Reasoning}
\author{Mnacho Echenim\inst{1} \and  Nicolas Peltier\inst{1}}

\institute{Univ. Grenoble Alpes, CNRS, LIG, F-38000 Grenoble France}
\authorrunning{M. Echenim and N. Peltier}

\titlerunning{A Proof Procedure For Separation Logic} 


\maketitle

\newcommand{\myabstract}
{
Two results are presented concerning the entailment problem in Separation Logic with inductively defined predicate symbols and theory reasoning.
First, we show that the entailment problem is undecidable for rules satisfying the conditions given in \cite{IosifRogalewiczSimacek13}, if theory reasoning is considered. The result holds for a wide class of theories, even with a very low expressive power. For instance it applies to  the natural numbers with the successor function, or with the usual order.
Second, we show that every entailment problem can be reduced to an entailment problem containing no equality (neither in the formulas nor in the recursive rules defining the semantics of the predicate symbols).
}

\section{Introduction}

\newcommand{\lst}{\mathtt{ls}}
\newcommand{\slst}{\mathtt{ils}}
\newcommand{\alst}{\mathtt{als}}

\newcommand{\twoexptime}{$2$-$\mathsf{EXPTIME}$}

In Separation Logic  (see, e.g., \cite{IshtiaqOHearn01,Reynolds02}), recursive data structures are usually specified using inductively defined predicates. The recursive rules defining the semantics of these predicates may be provided by the user. This specification
mechanism is similar to the definition of a recursive data type in an
imperative programming language.
For instance, a nonempty  list segment may be specified by using the inductive rules below, where the atom $x \mapsto (y)$ states that the memory location corresponding to $x$ is allocated and refers to $y$. 
The symbol $*$ is a special logical connective denoting the disjoint composition of heaps.
\[ 
\lst(x,y) \Leftarrow x \mapsto (y) \qquad \lst(x,y) \Leftarrow \exists x'~.~ (x \mapsto (x') * \lst(x',y))
\]
Sorted lists with elements inside the interval $[u,v]$ may be specified as follows.
\[
\slst(x,y,u,v) \Leftarrow  (x \mapsto (y)  \wedge x \leq v \wedge x \geq u)\]
\[
\slst(x,y,u,v) \Leftarrow  \exists x'~.~ (x \mapsto (x') * \slst(x',y,x,v) \wedge x \leq v \wedge x \geq u)
\]
Many verification tasks can be reduced to an entailment problem in this logic.
 For example, to verify that some formula $\phi$ is a loop invariant, 
 we have to prove that  the weakest pre-condition of $\phi$ w.r.t.\ a finite sequence of 
 transformations is a logical consequence of $\phi$. Since techniques exist for computing automatically such pre-conditions, the problem may be reduced to an entailment problem between two SL formulas. 
Entailment problems  can also be used to 
 express typing properties.  
 For instance one may have to check that the entailment 
$\slst(x,y,u,v) \models \lst(x,y)$ holds, i.e., that a sorted list is a list, 
that 
$v \leq v' \wedge \slst(x,y,u,v) \models \slst(x,y,u,v')$ holds (if $v$ is an upper bound then 
any number greater than $v$ is also an upper bound) 
or that
$\slst(x,y,u,v) * \slst(y,z,v,v') \models \slst(x,y,u,v')$ (the composition of two sorted lists is a sorted list).
In general, the entailment problem is undecidable 
\cite{DBLP:conf/atva/IosifRV14}, 
and a lot of effort has been devoted to identifying decidable fragments and devising proof procedures, see e.g., 
\cite{berdine-calcagno-ohearn04,CalcagnoYangOHearn01,cook-haase-ouaknine-parkinson-worell11,spen,DBLP:journals/fmsd/EneaLSV17,DemriGalmicheWendlingMery14}.
In particular, a very general class of  decidable  entailment problems is described in
\cite{IosifRogalewiczSimacek13}.
This fragment does not allow for any theory predicate other than equality and is defined by restricting the form of the inductive rules, which must fulfill $3$ conditions, formally defined below: the {\em progress} condition (every rule allocates a single memory location), 
the {\em connectivity} condition (the set of allocated locations has a tree-shaped structure)
and the {\em establishment} condition (every existentially quantified variable is eventually allocated). 
More recently, a
\twoexptime\ algorithm was proposed for such entailments
\cite{PMZ20}, and we showed  
in \cite{DBLP:conf/lpar/EchenimIP20}
 that this bound is 
tight. 
To tackle entailments such as those given above, one must be able to combine spatial reasoning with theory reasoning, and
%
the combination of SL with data constraints has been considered by several authors (see, e.g., \cite{DBLP:conf/cav/PiskacWZ13,DBLP:conf/pldi/Qiu0SM13,DBLP:conf/aplas/PerezR13,DBLP:conf/cade/XuCW17,DBLP:conf/vmcai/Le21}). 
It is therefore natural to ask whether the above decidability result extends to the case where 
theory reasoning is considered.
In the present paper, we show that this is not the case, even for very simple theories.
More precisely, we establish two new results.
First, we show that the entailment problem is undecidable for rules satisfying the above conditions if theory reasoning is allowed (Theorem \ref{theo:undec}). The result holds for a very wide class of theories, 
even for theories with a very low expressive power. For instance, it holds
for the natural numbers with only the successor function,
or with only the predicate $\leq$ (interpreted as usual).
Second, we show that 
every entailment can be reduced to an entailment not containing equality (Theorem \ref{theo:elimeq}). The intuition is that  all the equality and disequality constraints
can be encoded in the formulas describing the shape of the data structures.
The transformation increases the number of rules exponentially but it increases the size of the rules
only polynomially (hence it preserves the complexity results in \cite{EIP21a}). 
This result shows that the addition of the equality predicate does not increase the expressive power. It may be useful 
to facilitate the definition of proof procedures for the considered fragment.

\section{Preliminaries}

\newcommand{\tmodel}{\modelsr_{\cal T}}
\newcommand{\hpredicate}{spatial predicate\xspace}
\newcommand{\tpredicate}{$\theory$-predicate\xspace}

In this section, we define the syntax and semantics of the fragment of separation logic that is considered in the paper (see for instance \cite{DBLP:conf/csl/OHearnRY01,Reynolds02,IosifRogalewiczSimacek13} for more details).

\paragraph*{Syntax.}

Let $\vars$ be a countably infinite set of {\em variables}.
We consider a set $\tpreds$ of 
 {\em {\tpredicate}s} (or {\em theory predicates}, denoting relations in an underlying theory of locations) and 
a set  $\preds$ of {\em {\hpredicate}s}, disjoint from $\tpreds$. Each symbol $p\in \tpreds \cup \preds$ is  associated with a unique arity $\ar{p}$.
We assume that $\tpreds$ contains in particular two binary symbols $\iseq$ and $\not \iseq$ and a nullary symbol $\myfalse$.

\begin{definition}
Let $\rank$ be some fixed 
natural number.
The set of {\em {\slformula}s} (or simply formulas) $\aform$ is inductively defined as follows:
\[\aform := \emp \; \| \; x \mapsto (y_1,\dots,y_\rank) \; \| \; \aform_1 \vee \aform_2 \; \| \;  \aform_1 * \aform_2 \| \; 
p(x_1,\dots,x_{\ar{p}}) \; \| \; \exists x. ~ \aform_1   \]
where  $\aform_1,\aform_2$ are {\slformula}s, $p\in \tpreds \cup \preds$ and $x,x_1,\dots,x_{\ar{p}}, y_1,\dots,y_{\rank}$ are variables.
\end{definition}

A formula of the form $x \mapsto (y_1,\dots,y_\rank)$ is called a {\em points-to atom}, and a formula 
$p(x_1,\dots,x_{\ar{p}})$ with $p\in \preds$ is called a {\em predicate atom}. A {\em spatial atom} is either a points-to atom or a predicate atom.
A {\em \tatom} is a formula of the form $p(x_1,\dots,x_{\ar{p}})$ with $p\in \tpreds$.
An {\em atom} is either a spatial atom or a \tatom.
A {\em \tformula} is either $\emp$ or a separating conjunction of {\tatom}s.
A formula of the form $\exists x_1.\dots.\exists x_n.~\aform$ (with $n \geq 0$) is denoted by 
$\exists \vec{x}.~\aform$, where $\vec{x} = (x_1,\dots,x_n)$.
A formula is {\em predicate-free} (resp.\ {\em disjunction-free}, resp. {\em quantifier-free}) if it contains no predicate symbol in $\preds$ (resp.\ no occurrence of $\vee$, resp.\ of $\exists$).
It is in {\em prenex form} if it is of the form $\exists \vec{x}. \aform$, where $\aform$ is quantifier-free and $\vec{x}$ is a possibly empty vector of variables.
A {\em symbolic heap} is a prenex disjunction-free formula, i.e., a formula of the form $\exists \vec{x}. \aform$, where $\aform$ is  a separating conjunction of atoms.

Let $\fv{\aform}$ be the set of variables freely occurring in $\aform$. 
  We assume (using $\alpha$-renaming if needed) that all existential variables are distinct from free variables and that
 distinct occurrences of quantifiers bind distinct variables.
 A {\em substitution} $\sigma$ is a function mapping variables to variables. The {\em domain} $\dom{\sigma}$ of a substitution $\sigma$ is the set of variables $x$ 
such that $\sigma(x) \not = x$, 
and we let $\img{\sigma} = \sigma(\dom{\sigma})$. For any expression (variable, tuple of variables or formula) $e$, we denote by $e\sigma$ the expression obtained from $e$ by replacing 
every free occurrence of a variable $x$ by $\sigma(x)$ and by $\replall{}{x_i}{y_i}{1 \leq i \leq n}$ (where the $x_1,\dots,x_n$ are pairwise distinct) the substitution 
such that $\sigma(x_i) = y_i$ and $\dom{\sigma} \subseteq \{ x_1,\dots,x_n \}$.
 For all sets $E$, 
 $\card{E}$ 
 is the cardinality of $E$. 
 For all sequences or words $w$, 
 $\len{w}$ denotes the length of $w$.
 We sometimes identify vectors with sets, if the order is unimportant, e.g., we may write $\vec{x} \setminus \vec{y}$ to denote the vector formed by the components of $\vec{x}$ that do not occur in $\vec{y}$.

  We assume that the symbols in $\preds \cup \tpreds \cup \vars$ are words
  over a finite alphabet of some constant size, strictly greater than $1$. For any expression $e$, we denote by $\size{e}$ 
  the size of $e$, i.e., the number of occurrences of 
symbols\footnote{Each symbol $s$ in $\preds \cup \tpreds \cup \vars$ is counted with a weight equal to its length $\len{s}$, and all the logical symbols have weight $1$.} in $e$.
 We define the {\em width} of a formula as follows: 
{\small
\[
\begin{tabular}{llllll}
$\widt{\aform_1 \vee \aform_2}$ & $ = $ & $\max(\widt{\aform_1},\widt{\aform_2})$ & \qquad  
   $\widt{\exists x. \aform}$ & $=$ & $\widt{\aform} + \size{\exists x}$ \\
$\widt{\aform_1 * \aform_2}$ & $ = $ & $\widt{\aform_1} + \widt{\aform_2} + 1$ \\
$\widt{\aform}$ & $ = $ & $\size{\aform}$ \quad if $\aform$ is an atom \\
\end{tabular}
\]
}

Note that $\widt{\aform}$ coincides with $\size{\aform}$ if $\aform$ is disjunction-free.

\paragraph*{Inductive Rules.}

\newcommand{\dependson}[1]{\geq_{#1}}

The semantics of the predicates in $\preds$ is given  
by user-defined inductive rules. To ensure decidability in the case where the theory only contains the equality predicate, these rules must satisfy some additional conditions (defined in \cite{IosifRogalewiczSimacek13}):
\begin{definition}
\label{def:sid}
A (progressing and connected) set of inductive rules (\pcSID) $\asid$
is a finite set of rules of the form
\( p(x_1,\dots,x_n) \Leftarrow \exists \vec{u}. ~ x_1 \mapsto (y_1,\dots,y_\rank) 
* \aform\)
	where $\fv{x_1 \mapsto (y_1,\dots,y_\rank) 
		* \aform} \subseteq \{x_1, \ldots, x_n\} \cup \vec{u}$,
		$\aform$ is a possibly empty separating conjunction of predicate atoms and {\tformula}s, and for every predicate atom $q(z_1,\dots,z_{\ar{q}})$ 
		occurring in $\aform$, we have $z_1 \in \{ y_1,\dots,y_\rank\}$.
We let $\size{p(\vec{x}) \Leftarrow \aform} = \size{p(\vec{x})} + \size{\aform}$, 
$\size{\asid} = \Sigma_{\rho\in \asid} \size{\rho}$
and $\widt{\asid} = \max_{\rho \in \asid} \size{\rho}$.
\end{definition}

In the following, $\asid$ always denotes a \pcSID. 
Note that the right-hand side of  every inductive rule contains exactly one 
points-to atom, the left-hand side of which is the first argument $x_1$ of the predicate symbol (this condition is referred to as the {\em progress} condition), and
that this points-to atom contains the first argument of every predicate atom on the right-hand side of the rule (the {\em connectivity} condition).



\begin{definition}
\label{def:unfold}
We write $p(x_1,\dots,x_{\ar{p}}) \unfoldto{\asid} \aform$
if $\asid$ contains a rule (up to $\alpha$-renaming) 
$p(y_1,\dots,y_{\ar{p}}) \Leftarrow \aformB$, where $x_1,\dots,x_{\ar{p}}$ are not bound in $\aformB$, and
$\aform = \replall{\aformB}{y_i}{x_i}{i \in \interv{1}{{\ar{p}}}}$.

The relation $\unfoldto{\asid}$ is extended to all formulas as follows:
$\aform \unfoldto{\asid} \aform'$ if one of the following conditions holds:
(i) $\aform = \aform_1 \bullet \aform_2$ (modulo AC, with $\bullet \in \{ *, \vee \}$), $\aform_1 \unfoldto{\asid} \aform_1'$, no free or existential variable  in $\aform_2$ is bound in $\aform_1'$
and 
$\aform' = \aform_1' \bullet \aform_2$; 
or (ii) $\aform = \exists x.~ \aformB$, $\aformB \unfoldto{\asid} \aformB'$, $x$ is not bound in $\aformB'$
and 
$\aform' = \exists x.~ \aformB'$.
We denote by $\unfoldto{\asid}^+$ the transitive closure of $\unfoldto{\asid}$, and by $\unfoldto{\asid}^*$ its reflexive and transitive closure.
A formula $\aformB$ such that $\aform \unfoldto{\asid}^* \aformB$ is called an {\em $\asid$-unfolding} of 
$\aform$.
We denote by $\dependson{\asid}$ the least transitive and reflexive binary relation on $\preds$
such that $p \dependson{\asid} q$ holds if 
$\asid$ contains a rule of the form $p(y_1,\dots,y_{\ar{p}}) \Leftarrow \aformB$, where $q$ occurs in $\aformB$. If $\aform$ is a formula, we write $\aform \dependson{\asid} q$ if $p \dependson{\asid} q$ for some $p \in \preds$ occurring in $\aform$.
\end{definition}

\paragraph*{Semantics.}

\begin{definition}
Let $\Loc$ be a countably infinite set of 
{\em locations}.
An {\em SL-structure} is a pair $(\astore,\aheap)$ where 
$\astore$ is a {\em store}, i.e.\ a total function from $\vars$ to $\Loc$, and 
$\aheap$ is a {\em heap}, i.e.\ a partial finite function from $\Loc$ to $\Loc^\rank$ (written as a relation: $\aheap(\ell) = (\ell_1,\dots,\ell_\rank)$ iff $(\ell,\ell_1,\dots,\ell_\rank) \in \aheap$).
The {\em size} of a structure $(\astore,\aheap)$ is the cardinality  of $\dom{\aheap}$.
\end{definition}

For every heap $\aheap$, we define: $\locs{\aheap} = \{ \ell_i \mid (\ell_0,\dots,\ell_\rank) \in \aheap, i = 0,\dots,\rank \}$.
A location $\ell$ (resp.\ a variable $x$) is {\em allocated} in a heap $\aheap$ (resp.\ in a structure ($\astore,\aheap$)) if $\ell \in \dom{\aheap}$ (resp.\ $\astore(x)\in \dom{\aheap}$).
Two heaps $\aheap_1,\aheap_2$ are {\em disjoint} if 
$\dom{\aheap_1} \cap \dom{\aheap_2} = \emptyset$,
in this case $\aheap_1 \dunion \aheap_2$ denotes the 
union of $\aheap_1$ and $\aheap_2$.


 Let $\modelst$ be a satisfiability relation between stores and {\tformula}s, satisfying the following properties:
  $\astore \modelst x \iseq y$ (resp.\ $\astore \modelst x \not \iseq y$) iff $\astore(x) = \astore(y)$ (resp.\ $\astore(x) \not = \astore(y)$), $\astore \not \modelst \myfalse$ and  
 $\astore \modelst \atform *\atformB$ iff $\astore\modelst \atform$ and $\astore \modelst \atformB$. 
  For all {\tformula}s $\atform, \atformB$, we write 
$\atform \modelst \atformB$ if 
$\astore \modelst \atform \implies \astore \modelst \atformB$ holds for all stores $\astore$. 

  \begin{definition}
 \label{def:semantics}
Given formula $\aform$, a \pcSID $\asid$ and a structure $(\astore,\aheap)$,
we write $(\astore,\aheap) \modelsr \aform$ and say that $(\astore,\aheap)$ is an {\em $\asid$-model} (or simply a model if $\asid$ is clear from the context) of $\aform$ if one of the following conditions holds.
\begin{compactitem}
\item{$\aform = x \mapsto (y_1,\dots,y_\rank)$ and
$\aheap = \{ (\astore(x),\astore(y_1),\dots,\astore(y_\rank)) \}$.}
\item{$\aform$ is a \tformula, $\aheap = \emptyset$ and $\astore \modelst \aform$.}
\item{$\aform = \aform_1 \vee \aform_2$ and 
$(\astore,\aheap) \modelsr \aform_i$, for some $i  = 1,2$.}
\item{$\aform = \aform_1 * \aform_2$ and there exist disjoint heaps
$\aheap_1,\aheap_2$ such that 
$\aheap = \aheap_1 \dunion \aheap_2$ and 
$(\astore,\aheap_i) \modelsr \aform_i$, for all $i  = 1,2$.}

\item{$\aform = \exists x. ~ \aform$ and 
$(\astore',\aheap) \modelsr \aform$, for some store $\astore'$ coinciding with 
$\astore$ on all variables distinct from $x$.}
\item{
$\aform = p(x_1,\dots,x_{\ar{p}})$, $p \in \preds$ and  $(\astore,\aheap) \modelsr \aformB$ for some $\aformB$ such that
$\aform \unfoldto{\asid} \aformB$.}
\end{compactitem}
If $\aseq$ is a sequence of formulas, then we write $(\astore,\aheap) \modelsr \aseq$ if $(\astore,\aheap)$ satisfies at least one formula in $\aseq$.
\end{definition}
We emphasize that a \tformula is satisfied 
only 
in structures with empty heaps. This convention is used to simplify notations, because it avoids 
having to consider both standard and separating conjunctions.
Note that Definition \ref{def:semantics} is well-founded because of the progress condition: the size of $\aheap$ decreases at each recursive call of a predicate atom.
We write $\aform \modelsr \aformB$ if every $\asid$-model of $\aform$ is an $\asid$-model of $\aformB$
and  $\aform \equivr \aformB$ if $\aform \modelsr \aformB$ and $\aformB \modelsr \aform$.

Every formula can be transformed into prenex form using the well-known equivalences: $(\exists x. \aform) \bullet \aformB \equiv \exists x. (\aform \bullet \aformB)$, for all $\bullet \in \{ \vee, * \}$, where $x\not \in\fv{\aformB}$.

\paragraph*{Establishment.}

The notion of establishment \cite{IosifRogalewiczSimacek13} is defined as follows:
 
\begin{definition}
A \pcSID is {\em established}
if
for every atom $\anatom$,  every predicate-free  formula 
$\exists \vec{x}. \aform$ such that
$\anatom \unfoldto{\asid}^* \exists \vec{x}. \aform$ (up to a transformation into prenex form) and $\aform$ is quantifier-free, and  every $x\in \vec{x}$, $\aform$ is of the form $x' \mapsto (y_1,\dots,y_\rank) * \atform * \aformB$, where $\atform$ is a separating conjunction of equations (possibly $\emp$) such that
 $\atform \modelst x \iseq x'$. 
\end{definition}

In the remainder of the paper, we assume that every considered \pcSID is established.

\paragraph*{Sequents.}

We consider sequents denoting entailment problems and defined as follows:

\begin{definition}
\label{def:sequent}
A {\em sequent} is an expression of the form $\aform_0 \vdashr \aform_1,\dots,\aform_n$, where $\asid$ 
is a \pcSID and $\aform_0,\dots,\aform_n$ are formulas.
A sequent is {\em disjunction-free} if $\aform_0,\dots,\aform_n$ are disjunction-free, and {\em established} if $\asid$ is established.
We define: 
{\small
\begin{eqnarray*}
	\size{\aform_0 \vdashr \aform_1,\dots,\aform_n} &=& \Sigma_{i=0}^n \size{\aform_i} + \size{\asid}, \qquad \fv{\aform_1,\dots,\aform_n} = \bigcup_{i=0}^n \fv{\aform_i},\\
	\widt{\aform_0 \vdashr \aform_1,\dots,\aform_n} &=& \max \{  \widt{\aform_i}, \widt{\asid}, \card{ \bigcup_{i=0}^n \fv{\aform_i}} \mid 0 \leq i \leq n \}.
\end{eqnarray*}}
\end{definition}

\begin{definition}
\label{def:counter_model}
 A structure $(\astore,\aheap)$ is a {\em countermodel} of 
a sequent  $\aform \vdashr \aseq$ iff $\astore$ is injective,  $(\astore,\aheap) \modelsr \aform$
 and $(\astore,\aheap) \not \modelsr \aseq$.
A sequent is {\em valid} if it has no countermodel.
Two sequents are {\em equivalent} if they are both valid or both non-valid\footnote{Hence two non-valid sequents with different countermodels are equivalent.}.
\end{definition}
The restriction to injective countermodels is for technical convenience only and does not entail any loss of generality.

 \section{An Undecidability Result}

\label{sect:undec}

This section contains the main result of the paper. It shows that no terminating procedure for checking the validity of
(established) sequents possibly exists, even for theories with a very low expressive power.

\newcommand{\succp}{S}
\newcommand{\succs}{{\mathfrak S}}
\newcommand{\nsuccp}{\overline{S}}

\newcommand{\lsol}{W}
\renewcommand{\nil}{\mathrm{\bot}}
\newcommand{\startp}{\mathtt{B}}
\newcommand{\stopp}[1]{\mathtt{E}_{#1}}
\newcommand{\pairs}{\mathtt{P}}
\newcommand{\nextp}[1]{\rightarrow^#1}
\newcommand{\myrel}{\triangleleft}

\newcommand{\aloc}{\alpha}


\begin{theorem}
\label{theo:undec}
The validity problem is undecidable 
 for established sequents $\aform \vdashr \aformB$ if
 $\tpreds$ contains  predicates $\succp$ and $\nsuccp$, where:
\begin{compactitem}
\item{
$\succp$ is interpreted as a
relation $\succs$ satisfying the following property: there exists a set of pairwise distinct locations 
$\{ \aloc_i, \aloc_i', \aloc_i'' \mid i \in {\Bbb N} \}$ such that for all $i \in {\Bbb N}$, $(\aloc_i,\aloc_i') \in \succs$, $(\aloc_i'',\aloc_i') \not \in \succs$, and 
for all locations $\ell \in \{ \aloc_j,\aloc_j', \aloc_j'' \mid j \in {\Bbb N} \}$ if $\aloc_i \not = \ell$, $(\aloc_i,\ell) \in \succs$ and $(\aloc_{i}'',\ell) \not \in \succs$, then $\ell = \aloc_{i}'$;}
\item{
	$\nsuccp(x,y)$ and $\neg \succp(x,y)$ are interpreted equivalently when $x$ and $y$ refer to distinct locations.}
\end{compactitem}
\end{theorem}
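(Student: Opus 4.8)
The plan is to reduce an undecidable problem—most naturally the halting problem for Minsky (two-counter) machines, or equivalently the Post Correspondence Problem—to the validity problem for established sequents, exploiting the predicate $\succp$ to simulate an unbounded counter or tape that the establishment and progress conditions would otherwise forbid us from building. The key idea is that in a \pcSID every existentially quantified variable must eventually be allocated, so we cannot directly ``guess'' arbitrarily large natural-number data; however, the peculiar interpretation of $\succp$ (and its ``negative'' companion $\nsuccp$, which lets us assert non-edges inside formulas, since $\neg\succp$ is not itself available) lets the recursive rules lay down a chain of locations $\aloc_0, \aloc_0', \aloc_1, \aloc_1', \dots$ whose $\succp$/$\nsuccp$ relationships pin down, up to the injectivity of the store, a genuine successor structure isomorphic to an initial segment of $\Bbb N$. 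The axioms on $\succp$ in the theorem statement are exactly what is needed to force this: $(\aloc_i,\aloc_i')\in\succs$ gives the ``successor edge'', $(\aloc_i'',\aloc_i')\notin\succs$ together with the last clause gives a uniqueness/rigidity property that prevents a countermodel from collapsing or permuting the chain, and the auxiliary locations $\aloc_i''$ serve as ``witnesses of non-successor'' that can be tested with $\nsuccp$.

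Concretely, I would fix a two-counter machine $M$ and design a \pcSID $\asid$ together with two predicate atoms $\aform$ and $\aformB$ so that the sequent $\aform \vdashr \aformB$ is valid iff $M$ does \emph{not} halt (or the complementary encoding, whichever makes the spatial structure ``describe a halting computation on the left, refuted on the right''). Following the standard pattern for SL entailment undecidability (as in \cite{DBLP:conf/atva/IosifRV14}), one side generates \emph{all} heaps of a certain shape—linked lists of configuration-records, each record allocating one location per step and carrying, via $\succp$-linked sublists, the two counter values—and the other side generates all such heaps that are \emph{not} faithful encodings of $M$'s run, i.e. that contain a ``local error'' (wrong initial configuration, an illegal transition between consecutive records, or a counter value that is not the successor/predecessor it should be). Validity of the sequent then means every heap on the left is erroneous, i.e. there is no accepting run. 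Each ``local error'' is a finitely describable spatial+theory pattern, so it can be captured by finitely many progressing, connected, established rules; the points-to atom in each rule allocates the ``current'' record location $x_1$, the connectivity condition is met because every sub-predicate is rooted at a field of that record, and establishment holds because every existential (the fields, the next record, the successor-chain cells) is allocated further down the unfolding. The role of $\nsuccp$ is precisely to let the error rules say things like ``this cell is \emph{not} the successor of that one'' without a genuine negation connective.

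The main obstacle—and the step I would spend the most care on—is proving that the rigidity axioms on $\succp$ genuinely force the intended arithmetic semantics on every \emph{injective} countermodel, not merely on the canonical intended model. Because countermodels must have injective stores but $\succs$ is an arbitrary relation subject only to the stated local conditions, I must argue that any heap built by $\asid$ whose $\succp$/$\nsuccp$-constraints are all satisfied really does line up its chain cells with the distinguished locations $\aloc_i,\aloc_i',\aloc_i''$ in order, so that ``successor'' in the heap coincides with ``$+1$''. This is where the last bullet of the hypothesis on $\succs$ does the work: it says that among the distinguished locations, $\aloc_i'$ is the \emph{unique} $\ell$ with $(\aloc_i,\ell)\in\succs$ and $(\aloc_i'',\ell)\notin\succs$, which lets an inductive argument propagate the identification of the chain cell at position $i$ to position $i+1$. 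I would also need the companion fact that $\nsuccp$ lets the rules \emph{detect} a counter that is too large or too small, and here the $\aloc_i''$ cells act as ``anti-witnesses'' that can be allocated and then tested. A secondary technical point is discharging establishment and the progress/connectivity conditions for every rule in the construction, and checking that the counters, encoded as $\succp$-chains sharing no locations with the configuration spine, compose under $*$ without disjointness conflicts; these are routine but must be stated carefully. Finally, I would note that the very same construction works verbatim for the alternative hypothesis where $\tpreds$ contains only $\leq$ interpreted as the usual order on $\Bbb N$, since $\leq$ lets one both assert $x\leq y$ and, via $y\leq x$ of fresh copies, recover enough of a successor relation; I would remark on this reduction at the end rather than redo the argument.
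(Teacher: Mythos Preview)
Your overall architecture---left-hand side generates candidate structures, right-hand side collects all ``locally detectable errors'', and validity holds iff the underlying undecidable problem has no solution---is exactly what the paper does (it uses PCP rather than Minsky machines, but that difference is inessential). The genuine gap is in how you propose to exploit $\succp$ and $\nsuccp$.

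You read the hypothesis on $\succs$ as providing a successor \emph{chain}: you want to ``lay down a chain of locations $\aloc_0,\aloc_0',\aloc_1,\aloc_1',\dots$'' and then ``propagate the identification of the chain cell at position $i$ to position $i+1$''. But the hypothesis says nothing whatsoever relating $\aloc_i'$ to $\aloc_{i+1}$; the index $i$ is just an enumeration of an infinite family of \emph{independent} triples. What the hypothesis actually gives you is a \emph{pairing} (or labelling) mechanism: for each $i$, the pair $(\aloc_i,\aloc_i'')$ uniquely determines $\aloc_i'$ via the test ``$\succp(\aloc_i,\ell)\wedge\nsuccp(\aloc_i'',\ell)$''. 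There is no order, no chain, no arithmetic. In particular your construction must work when $\succs$ is $\leq$ on $\Bbb N$, and then $\succp(x,y)$ holds for infinitely many $y$, so a ``$\succp$-linked sublist'' does not encode a counter value. Your final paragraph asserts that the $\leq$ case goes through ``verbatim'', but if your encoding of counters relies on $\succp$ behaving like a functional successor, it does not.

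The paper's reduction uses the triples quite differently. The left-hand predicate builds a single list whose cells carry two position markers $p_i,q_i$ (one cursor into the $u$-word and one into the $v$-word of a PCP candidate), together with three freshly allocated auxiliary locations $y_i,z_i,u_i$. The right-hand ``error'' predicates then use the constraints $\succp(y,z)*\nsuccp(u,z)$ to \emph{synchronise} the two cursors: if at the $l$-th block boundary on the $u$-side one places $(\aloc_l,\aloc_l'')$ into the $y,u$ slots, then the uniqueness clause forces the matching $z$-slot---which sits at the $l$-th block boundary on the $v$-side---to be $\aloc_l'$. This lets the right-hand side express ``the $l$-th indices on the two sides disagree'' as a finite, establishment-respecting pattern. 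So $\succp$ is not used to count; it is used to create an unbounded supply of \emph{unforgeable tags} that let two independent positions in the same heap be tested for correspondence. Your Minsky-machine plan could be repaired along these lines (tagging counter cells to compare lengths across consecutive configurations), but that is a different mechanism from the one you sketched.
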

For instance, the hypotheses of Theorem \ref{theo:undec} are trivially satisfied on the natural numbers
if $\succs$ is the successor function
or if $\succs$ is the usual order $\leq$, and $\nsuccp$ is $\geq$ (with $\aloc_i = 3.i$, $\aloc_i' = 3.i+1$, $\aloc_i'' = 3.i+2$ in both cases, since $\aloc_i+1 \iseq \ell \Rightarrow \aloc_i' \iseq \ell$ and $\aloc_i \leq \ell \wedge \aloc_{i}'' > \ell \wedge \ell \not = \aloc_i \Rightarrow \aloc_{i}' \iseq \ell$. ). 
More generally, the conditions hold if the domain is infinite and $\succs$ is any injective  function $f$ such that $f(x) \not = x$.
In this case, the sequences $\aloc_i,\aloc_i',\aloc_i''$ may be constructed inductively: for every $i \in {\Bbb N}$,  $\aloc_i$ is any element $e$ such that both $e$ and $f(e)$ do not occur in $\{ \aloc_j,\aloc_j',\aloc_j'' \mid j < i \}$, 
$\aloc_i'$ is $f(\aloc_i)$ and 
$\aloc_i''$  is any element not occurring in $\{ \aloc_j,\aloc_j',\aloc_j'' \mid j < i \}  \cup \{ \aloc_i,\aloc_i' \}$. 
Note that in this case the locations $\aloc_i''$ are actually 
irrelevant, but
these locations play an essential r\^ole in the undecidability proof 
when $\succs$ is $\leq$.
The remainder of the section is devoted the proof of Theorem \ref{theo:undec}.

\begin{proof}
	The proof goes by a reduction from the Post Correspondence Problem (PCP).
	We recall that the PCP consists in determining, given two sequences of words
	$u = (u_1,\dots,u_n)$ and $v = (v_1,\dots,v_n)$, whether there exists a nonempty sequence $(i_1,\dots,i_k)\in \interv{1}{n}^k$ such that
	$u_{i_1}.\dots,u_{i_k} = v_{i_1}.\dots,v_{i_k}$. It is well-known that this problem is undecidable.
	We assume, w.l.o.g., that
	 $\len{u_i} > 1$
	  and $\len{v_i} > 1$ 
	  for all $i \in \interv{1}{n}$.
	A word $w$ such that $w = u_{i_1}.\dots,u_{i_k} = v_{i_1}.\dots,v_{i_k}$  is called a {\em witness}.
	Positions  inside words of the sequences $(u_1,\dots,u_n)$ and $(v_1,\dots,v_n)$ will be denoted by pairs $(i,j)$, 
	encoding the $j$-nth character of the words $u_i$ or $v_i$. More formally, if $p = (i,j)$, and $w \in \{u,v\}$, then  we denote by $w(p)$ the $j$-th symbol of the word $w_i$, provided this symbol is defined. 	 We write $p \myrel q$ if
	both $u(p)$ and $v(q)$ are defined and $u(p) = v(q)$. 
		
	Let $m = \max \{ \len{u_i}, \len{v_i} \mid i \in \interv{1}{n} \}$. We denote by 
	$\pairs$ the {set of} pairs of the form $(i,j)$ with $i\in \interv{1}{n}$ and $j \in \interv{1}{m}$, and by 
	$\startp$ the {set of} pairs of the form $(i,1)$. For $w\in \myset{u,v}$, we denote by $\stopp{w}$ the set of pairs of the form $(i,\len{w_i})$, where $i\in \interv{1}{n}$, and we write $(i,j) \nextp{w} (i',j')$ 
	either $i' = i$, $j< \len{w_i}$ and $j' = j+1$, 
	or $j = \len{w_i}$ and $j' = 1$. 
Note that $i'$ is arbitrary in the latter case (intuitively 
$(i,j) \nextp{w} (i',j')$  states that the character corresponding to the position $(i,j)$ may be followed  in a witness by 
the character at position $(i',j')$).
	Let $\vec{v}$ be a vector of variables, where all elements $p \in \pairs$  
	are associated with pairwise distinct variables in $\vec{v}$. To simplify notations, we will  also denote by $p$ the variable associated with $p$. We assume the vector 
	$\vec{v}$ also contains a special variable $\nil$, distinct from the variables $p \in \pairs$.
	We construct a representation of potential witnesses as heaps.
	The encoding is given for $\rank = 6$, although in principle this encoding could be defined with $\rank = 2$
	by encoding tuples as binary trees.
	Witnesses are encoded by linked lists, with links on the last argument, 
	and starting with a dummy element $(\nil,\dots,\nil)$. Except for the first dummy element, each location 
	in the list 
	refers to two locations associated with pairs $p,q \in  \pairs$ denoting positions inside the two sequences 
	$u_1,\dots,u_n$ and $v_1,\dots,v_n$ respectively, and to three additional allocated locations the r\^ole of which will be detailed below.  
	

	{\small
		\[
		\begin{tabular}{rclr}
			$\lsol(x,\vec{v})$ & $\Leftarrow$ & $\exists x'. ~ x \mapsto (\nil,\nil,\nil,\nil,\nil,x') 
			* \lsol_{p,p}(x', \vec{v})$ \\
			& & \qquad where $p \in \startp$  \\
			$\lsol_{p,q}(x,\vec{v})$ &  $\Leftarrow$ & $\exists x',y,z,u. ~ x \mapsto (p,q,y,z,u,x') * \lsol_{p',q'}(x',\vec{v}) * P(y,\nil) * P(z,\nil)$ \\
			& & \quad  $*\ P(u,\nil)$ \qquad where $p\myrel q$, $p \nextp{u} p'$ and $q \nextp{v} q'$\\
			
			$\lsol_{p,q}(x,\vec{v})$ &  $\Leftarrow$ & $\exists y,z,u. ~ x \mapsto (p,q,y,z,u,\nil) * P(y,\nil) * P(z,\nil)$ \\
			& & $*\ P(u,\nil)$ \qquad where $p = (i,\len{u_i})$, $q = (i,\len{v_i})$, and $p \myrel q$ \\
			$P(x,y)$ & $\Leftarrow$ & $ x \mapsto (y,y,y,y,y,y)$ \\
		\end{tabular}
				\]
	}


	\newcommand{\succpbFirstCell}{A}
	\newcommand{\succpbInnerCell}{B}
	\newcommand{\pbSeq}{C}
	\newcommand{\setof}[2]{\left\{#1\,\|\:#2\right\}}
	\newcommand{\ellg}[2]{\ell^{#1}_{#2}}
	
	By construction, the structures that validate $\lsol(x,\vec{v})$ are of the form $(\astore, \aheap)$, where the store $\astore$ verifies:
	\begin{compactitem}
		\item $\astore(x) = \ell$ and $\astore(\nil) = \ell'$;
		\item for all $i = 1,\dots,m'$, $\astore(p_i) = \ellg{p}{i}$ and $\astore(q_i) = \ellg{q}{i}$, where $p_i,q_i \in \pairs$ are such that $p_i \myrel q_i$,  $p_i \nextp{u} p_{i+1}$ and $q_i \nextp{v} q_{i+1}$,
	\end{compactitem} 
	and the heap $\aheap$ is of the form (with $\ell_{m'+1} = \ell'$):
	\begin{eqnarray*}
		\aheap\ =\ \myset{(\ell,\ell',\ell',\ell',\ell',\ell',\ell_1)}
		 & \cup 
		 & \setof{(\ell_i,\ellg{p}{i},\ellg{q}{i},\ellg{y}{i},\ellg{z}{i},\ellg{u}{i},\ell_{i+1})}{i = 1, \ldots, m'}\\
		 & \cup & \setof{(\ellg{y}{i},\ell',\ell',\ell',\ell',\ell',\ell')}{i = 1,\ldots, m'}\\
		 & \cup & \setof{(\ellg{z}{i},\ell',\ell',\ell',\ell',\ell',\ell')}{i = 1,\ldots, m'}\\
		 & \cup & \setof{(\ellg{u}{i},\ell',\ell',\ell',\ell',\ell',\ell')}{i = 1,\ldots, m'}.
	\end{eqnarray*}
Still by  construction, we have $p_1 = q_1 \in \startp$, $p_{m'} \in \stopp{u}$ and $q_{m'} \in \stopp{v}$, and  there exists $i$ such that $p_{m'}$ and $q_{m'}$ are of the form $(i,\len{u_i})$ and $(i,\len{v_i})$, 
respectively.
This entails that the words $u(p_1).\dots,u(p_{m'})$ and
$v(p_1).\dots,v(p_{m'})$
are of form 
$u_{i_1}.\dots.u_{i_k}$ 
and 
$v_{j_1}.\dots.v_{j_{k'}}$, 
for some sequences
$(i_1,\dots,i_k)$ and $(j_1,\dots,j_{k'})$ of elements in $\interv{1}{n}$, {and both words are identical}.
However the sequences $(i_1,\dots,i_k)$
and
$(j_1,\dots,j_{k'})$ may be distinct (but we  have $i_1 = j_1$ and $i_k = j_{k'}$).
To check that the PCP has a solution, we must therefore verify that there exists a structure of the  form above such that 
$(i_1,\dots,i_k) = (j_1,\dots,j_{k'})$. To this purpose, we introduce predicates that are satisfied when this condition does not hold, i.e., such that either $k\neq k'$ or $i_l \neq j_{l}$ for some $l \in \interv{2}{k-1}$.
This is done by using the additional locations $\ellg{y}{i}$, $\ellg{z}{i}$ and $\ellg{u}{i}$
to relate the indices $I_l = \len{u_{i_1}.\dots,u_{i_{l-1}}}+1$ 
and $J_l = \len{v_{j_1}.\dots,v_{j_{l-1}}}+1$, corresponding to the beginning of the words 
$u_{i_l}$ and $v_{j_l}$ respectively in  $u_{i_1}.\dots.u_{i_k}$ 
and 
$v_{j_1}.\dots.v_{j_{k'}}$.
 The predicates relate the locations of the form $\ellg{y}{i}$, $\ellg{z}{i}$ and $\ellg{u}{i}$ using the relation $\succs$. 
More precisely, they are associated with rules that guarantee that all
 the countermodels of the right-hand side of the sequent will satisfy the following properties:
\begin{compactenum}
	\item 
	$k = k'$ 
	and for all $l \in \interv{1}{k}$, $(\ellg{y}{I_l},\ellg{z}{J_l}) \in \succs$
	and $(\ellg{u}{I_l},\ellg{z}{J_l}) \not\in \succs$.
	\item 
	$i_l = j_l$ for $1\leq l\leq k$.
\end{compactenum}
Note that the hypothesis of the theorem ensures that 
the locations $\ellg{y}{i}$, 
$\ellg{z}{i}$ and $\ellg{u}{i}$ 
can be chosen in such a way that 
there is a {\em unique} location $\ellg{z}{i}$ satisfying  $(\ellg{y}{I_l},\ellg{z}{i}) \in \succs \wedge (\ellg{u}{J_l}, \ellg{z}{i}) \notin \succs$, 
thus property $1$ above can be used to relate the indices $I_l$ and $J_l$, which, in turns, allows us to enforce property $2$.
Two predicates are used to guarantee that condition 1 holds for the countermodels. Predicate $\succpbFirstCell$ is satisfied by those structures for which the condition is not satisfied for $l = 1$, 
and $\succpbInnerCell$ is satisfied for those structures for which either $k\neq k'$ or there is 
an $l \in \interv{1}{k-1}$ such that the condition is satisfied at $l$, but not at $l+1$. 
Thus the structures that satisfy $\lsol(x,\vec{v})$ and that are countermodels of the disjunction of $\succpbFirstCell$ and $\succpbInnerCell$ are exactly the structures for which $k = k'$ and 
$(\ellg{y}{I_l},\ellg{z}{J_l}) \in \succs \wedge (\ellg{u}{I_l},\ellg{z}{J_l}) \notin \succs$ 
for 
	$l = 1, \ldots, k$. Predicate $\pbSeq$ is then used to guarantee that for all $1\leq l \leq k$, $p_{I_l} = q_{J_l}$: this predicate is satisfied by the structures for which there is an $l$ such that 
	$(\ellg{y}{I_l},\ellg{z}{J_l}) \in \succs \wedge (\ellg{u}{I_l},\ellg{z}{J_l}) \notin \succs$ 
	holds 
	but $p_{I_l} \neq q_{J_l}$.
	We first give the rules for predicate $\succpbFirstCell$. For conciseness, we allow for disjunctions in the right-hand side of the rules (they can be eliminated by transformation into dnf).
\[
\begin{tabular}{rclr}
	$\succpbFirstCell(x,\vec{v})$ & $\Leftarrow$ & $\exists x'. ~ x \mapsto (\nil,\nil,\nil,\nil,\nil,x') * \succpbFirstCell'(x',\vec{v})$ \\
	$\succpbFirstCell'(x,\vec{v})$ & $\Leftarrow$ & $\exists x',y,z,u. ~ x \mapsto (p,q,y,z,u,x') * \lsol_{p',q'}(x,\vec{v}) * P(y,\nil) $ \\
	& & $\qquad *\ P(z,\nil) * P(u,\nil) * (\nsuccp(y,z) \vee \succp(u,z))$,  for every $p,q,p',q'\in \pairs$ 
\end{tabular}
\]
Note that since $y,z,u$ are allocated in distinct predicates, they must be distinct, 
hence 
$\nsuccp(y,z)$ is equivalent to $\neg \succp(y,z)$ and
$\succp(u,z)$ is equivalent to $\neg \nsuccp(u,z)$.

	\newcommand{\PPP}{\aform'}

	We now define the rules for predicate $\succpbInnerCell$, which is meant to ensure that  
the condition ``$(\ellg{y}{I_l},\ellg{z}{J_l}) \in \succs$
and
$(\ellg{u}{I_l},\ellg{z}{J_l}) \not \in \succs$'' ($\dagger$) propagates, i.e., that if it holds for some $l$
then it also holds for $l+1$.
This predicate has additional parameters $y,y',z,z',u,u'$ corresponding to the 
locations $\ellg{y}{I_l},\ellg{y}{I_{l+1}},\ellg{z}{J_j},\ellg{z}{J_{j+1}},\ellg{u}{I_l},\ellg{u}{I_{l+1}}$ 
which ``break'' the propagation of $(\dagger)$. By definition $y,y',z,z',u,u'$ must be chosen in such a way that the \tformula 
$\succp(y,z) * \nsuccp(u,z) * (\nsuccp(y',z') \vee \succp(u',z'))$ holds.
The predicates $\succpbInnerCell_{a,b}$ with $a,b\in \{0,1,2\}$ allocate all the locations $\ell_1,\dots,\ell_{m'}$
and in particular the ``faulty'' locations  associated with $y,y',z,z',u,u'$.
Intuitively, 
$a$ (resp.\ $b$) denote the number of variables in $\{ y,y' \}$ 
(resp.\ $\{ z,z' \}$)
that have been allocated. 
The rules for predicates $\succpbInnerCell_{a,b}$ are meant to guarantee that the following conditions are satisfied for variables $y$ and $y'$ (similar constraints hold for $z$ and $z'$):
\begin{itemize}
	\item $y$ is allocated before $y'$,
	\item $y$ is allocated for a variable $p$ corresponding to the beginning of a word $(p\in \startp$),
	\item when $y$ has been allocated, no variable $p\in \startp$ can occur on the right-hand side of a points-to atom until $y'$ is allocated.
\end{itemize}
Several cases are distinguished depending on whether the locations associated with $y$ and $z$ (resp.\ $y'$ and $z'$) are in the same heap image of a location or not. 
Note that $u$ and $u'$ are allocated in the same rules as $y$ and $y'$ respectively. 
The predicate $\succpbInnerCell$ also tackles the case where $k \not = k'$. This corresponds to the case where 
the recursive calls end with $\succpbInnerCell_{1,2}$ or $\succpbInnerCell_{2,1}$ in the last rule below, 
meaning that  $(\dagger)$ holds for some $i$, with either $i = k$ and $i < k'$ or $i = k'$ and $i < k$.
For the sake of conciseness and readability, we denote by $\vec{w}$ the vector of variables $\vec{v},y,y',z,z',u,u'$ in the rules below. We also denote by $\PPP(y,z,u)$ the formula $P(y,\nil) * P(z,\nil) * P(u,\nil)$.
	
	
	{\small
		\[
		\begin{tabular}{rclr}
			$\succpbInnerCell(x,\vec{w})$ & $\Leftarrow$ & $x \mapsto (\nil,\nil,\nil,\nil,\nil,x') * \succpbInnerCell_{0,0}(x',\vec{w})$ \\
			& & $\qquad * \succp(y,z) * \nsuccp(u,z) * (\nsuccp(y',z') \vee \succp(u',z'))$ \\
			
			$\succpbInnerCell_{a,b}(x,\vec{w})$ & $\Leftarrow$ & $\exists x',y'',z'',u''. x \mapsto (p,q,y'',z'',u'',x') * \succpbInnerCell_{a,b}(x',\vec{w}) * \PPP(y'',z'',u'') $ \\
			&&  \quad 
			if ($a \not = 1$ or $p \not \in \startp$) and  ($b \not = 1$ or $q \not \in  \startp$) \\

			$\succpbInnerCell_{0,0}(x,\vec{w})$ & $\Leftarrow$ & $\exists x'. x \mapsto (p,q,y,z,u,x') * \succpbInnerCell_{1,1}(x',\vec{w}) * \PPP(y,z,u)$ \\
			& & \quad if $p,q \in \startp$ \\
			
			$\succpbInnerCell_{0,1}(x,\vec{w})$ & $\Leftarrow$ & $\exists x'. x \mapsto (p,q,y,z',u,x') * \succpbInnerCell_{1,2}(x',\vec{w}) * \PPP(y,z',u)$ \\
			& & \quad if $p,q \in \startp$ \\
			
			$\succpbInnerCell_{1,0}(x,\vec{w})$ & $\Leftarrow$ & $\exists x'. x \mapsto (p,q,y',z,u',x') * \succpbInnerCell_{2,1}(x',\vec{w}) * \PPP(y',z,u')$ \\
			& & \quad if $p,q \in \startp$ \\
			
			$\succpbInnerCell_{1,1}(x,\vec{w})$ & $\Leftarrow$ & $\exists x'. x \mapsto (p,q,y',z',u',x') * \succpbInnerCell_{2,2}(x',\vec{w}) * \PPP(y',z',u')$ \\
			& & \quad if $p,q \in \startp$ \\

			$\succpbInnerCell_{0,b}(x,\vec{w})$ & $\Leftarrow$ & $\exists x',z''. x \mapsto (p,q,y,z'',u,x') * \succpbInnerCell_{1,b}(x',\vec{w}) * \PPP(y,z'',u)$ \\
			& & \quad if $p \in \startp$ and  ($b \not = 1$ or $q \not \in  \startp$) \\
			
			$\succpbInnerCell_{1,b}(x,\vec{w})$ & $\Leftarrow$ & $\exists x',z''. x \mapsto (p,q,y',z'',u',x') * \succpbInnerCell_{2,b}(x',\vec{w}) * \PPP(y',z'',u')$ \\
			& & \quad if $p \in \startp$ and  ($b \not = 1$ or $q \not \in  \startp$)  \\
			
			
			$\succpbInnerCell_{a,0}(x,\vec{w})$ & $\Leftarrow$ & $\exists x',y'',u''. x \mapsto (p,q,y'',z,u'',x') * \succpbInnerCell_{a,1}(x',\vec{w}) * \PPP(y'',z,u'')$ \\
			& & \quad if $q \in \startp$ and ($a \not = 1$ or $p \not \in \startp$)\\
			
			$\succpbInnerCell_{a,1}(x,\vec{w})$ & $\Leftarrow$ & $\exists x',y'',u''. x \mapsto (p,q,y'',z',u'',x') * \succpbInnerCell_{a,2}(x',\vec{w}) * \PPP(y'',z',u'')$ \\
			& & \quad if $q \in \startp$ and ($a \not = 1$ or $p \not \in \startp$) \\
			
			
			$\succpbInnerCell_{a,b}(x,\vec{w})$ & $\Leftarrow$ & $\exists y'',z'',u''. x \mapsto (p,q,y'',z'',u'',\nil) * \PPP(y'',z'',u'')$ \\
			& & if  $(a,b) \in \{ (2,2), (2,1), (1,2) \}$
		\end{tabular}
		\]
	}
 A straightforward induction 
 permits to verify that 
	if the considered structure does not satisfy the formula
	$\succpbFirstCell(x,\vec{v}) \vee \exists y,z,y',z',u,u'.\succpbInnerCell(x,\vec{w})$
	then necessarily $k = k'$ and for all
	$l \in \interv{1}{k}$, we have $(\ellg{y}{I_l},\ellg{z}{J_l}) \in \succs \wedge (\ellg{u}{I_l},\ellg{z}{J_l}) \not \in \succs$.
	
There remains to check that $p_{I_i} = q_{J_i}$  
for all $i \in \interv{1}{k}$. To this aim, we design  an atom  $\pbSeq(x,\vec{v})$ that will be satisfied by structures not validating this condition, assuming 
the condition $(\dagger)$ above is fulfilled. 
This predicate allocates the location $\ell$ and introduces existential
variables $y,z,u$ 
denoting the faulty locations $\ellg{y}{I_i}, \ellg{z}{J_i}$ and 
$\ellg{u}{I_i}$, 
i.e., the locations corresponding to the index $i$ such that 
$p_{I_i} \not = q_{J_i}$. 
By $(\dagger)$, these locations must be chosen in such a way that the constraints
$\succp(y,z)$ and $\nsuccp(y,z)$ are satisfied. The predicate $\pbSeq(x,\vec{v})$ also guesses pairs $p,q$ such that $p \not = q$ (denoting the distinct pairs 
$p_{x_i}$ and $q_{y_i}$) and invokes the predicate $\pbSeq_{p,q}^{(0,0)}$ to allocate all the remaining locations.
As for the previous rules, the predicates $\pbSeq_{p,q}^{a,b}$, for $p,q \in \startp$
$a,b\in \{0,1 \}$  allocate $\ell_1,\dots,\ell_{m'}$, where $a$ (resp.\ $b$)
denotes the number of variables in $\{ y \}$ (resp. $\{ z\}$) that have already been allocated.
In the rules below, we denote by $\vec{u}$ the vector $\vec{v},y,z,u$.
In all the rules we have $p',q'\in \pairs$. 
	
	
	{\small
		\[
		\begin{tabular}{rclr}
			$\pbSeq(x,\vec{v})$ & $\Leftarrow$ & $\exists y,z,u. ~ x \mapsto (\nil,\nil,\nil,\nil,\nil,x') * \pbSeq_{p,q}^{0,0}(x',\vec{u}) * \succp(y,z) * \nsuccp(u,z)$  \\ 
			& & \quad if $p \not = q$ and $p,q \in \startp$ \\
			$\pbSeq_{p,q}^{a,b}(x,\vec{u})$ & $\Leftarrow$ & $\exists x',y'',z'',u''. x \mapsto (p',q',y'',z'',u'',x') * \pbSeq_{p,q}^{a,b}(x,\vec{u})$ \\  
			& & $\qquad *\ \PPP(y'',z'',u'')$ \\
			$\pbSeq_{p,q}^{0,0}(x,\vec{u})$ & $\Leftarrow$ & $\exists x'. x \mapsto (p,q,y,z,u,x') * \pbSeq_{p,q}^{1,1}(x,\vec{u}) * \PPP(y,z,u)$  \\
			$\pbSeq_{p,q}^{0,b}(x,\vec{u})$ & $\Leftarrow$  & $\exists x',z''. x \mapsto (p,q',y,z'',u,x') * \pbSeq_{p,q}^{1,b}(x,\vec{u}) * \PPP(y,z'',u)$ \\
			$\pbSeq_{p,q}^{a,0}(x,\vec{u})$ & $\Leftarrow$ & $\exists x',y'',u''. x \mapsto (p',q,y'',z,u'',x') * \pbSeq_{p,q}^{a,1}(x,\vec{u}) * \PPP(y'',z,u'')$ \\
			$\pbSeq_{p,q}^{1,1}(x,\vec{u})$ & $\Leftarrow$  & $\exists y'',z'',u''. x \mapsto (p',q',y'',z'',u'',\nil) * \PPP(y'',z'',u'')$ \\
		\end{tabular}
		\]
	}
	
	The PCP has a solution iff 
	the sequent 
	\[\lsol(x,\vec{v}) \vdashr \succpbFirstCell(x,\vec{v}), \exists y,z,y',z',u,u'.\succpbInnerCell(x,\vec{w}), \pbSeq(x,\vec{u})\]
	has a countermodel.
	Indeed, if a structure satisfying the atom $\lsol(x,\vec{v})$ but not the disjunction $\succpbFirstCell(x,\vec{v}) \vee \exists y,z,y',z',u,u'.\succpbInnerCell(x,\vec{w}) \vee \pbSeq(x,\vec{u})$
	exists, then as explained above, there exists a word 
	$u_{i_1}.\dots.u_{i_k} = v_{j_1}.\dots.v_{j_{k'}}$, 
	with $(i_1,\dots,i_k) = (j_1,\dots,j_{k'})$.
	Conversely, if a solution of the PCP exists, then by using the locations $\aloc_l,\aloc_l',\aloc_l''$ in the hypothesis of the lemma as $\ellg{y}{I_l},\ellg{z}{J_l},\ellg{u}{I_l}$ it is easy 
	to construct a 
	a structure satisfying $\lsol(x,\vec{v})$. 
	Further, by hypothesis, since $(\aloc_l,\aloc_l') \in \succs$ and $(\aloc_l'',\aloc_l') \not\in \succs$, 
	we have $(\ellg{y}{I_l},\ellg{z}{J_l}) \in \succs$ and $(\ellg{u}{I_l},\ellg{z}{J_l}) \notin \succs$ 
	for all $l = 1,\dots,k$. 
	Thus  $\succpbFirstCell(x,\vec{v})$ and $\exists y,z,y',z',u,u'.\succpbInnerCell(x,\vec{w})$ do not hold. 
	To fulfill $\neg \pbSeq(x,\vec{u})$ we have to ensure that, for all $i,j\in \interv{1}{k}$, 
	we have $(\ellg{y}{I_i},\ellg{z}{J_j}) \in \succs \wedge  (\ellg{u}{I_i},\ellg{z}{J_j}) \not \in \succs \implies
	p_{I_i} = q_{J_i}$. 
Since the considered word is a solution of the PCP, we have $p_{I_i} = q_{J_i}$ 
	for all $i = 1,\dots,k$,  hence
	$\neg \pbSeq(x,\vec{u})$ is satisfied.
   \qed
\end{proof}

\section{Eliminating Equations and Disequations}

In this section, 
we show that the equations and disequations can always be eliminated from established sequents (in exponential time), while preserving equivalence. 
The intuition is that the equations can be discarded by instantiating the inductive rules,
while the disequations can be replaced by assertions that the considered variables 
are allocated in disjoint parts of the heap.

 
 
We first introduce an additional restriction 
on \pcSIDs that is meant to ensure that the set of free variables 
allocated by a predicate atom is the same in every unfolding.
The \pcSID satisfying this condition are called {\em \alloccompatible}. 
We will show that every \pcSID can be reduced to 
an equivalent \alloccompatible set.
Let $\allocf$ be a function mapping each predicate symbol $p$  to a 
subset  of $\interv{1}{\ar{p}}$.
For any disjunction-free  formula $\aform$, we denote by $\alloc{\aform}$  the set 
of variables $x\in \fv{\aform}$ such that $\aform$ contains an atom of the form 
$x \mapsto (y_1,\dots,y_\rank)$ or
$p(z_1,\dots,z_n)$, with $x = z_i$ for some $i \in \alloc{p}$.
%
\newcommand{\afun}{{\frak f}}

 \newcommand{\expl}[1]{#1^*}

 \begin{definition}
 An established \pcSID $\asid$ is {\em \alloccompatible} if for all rules $\anatom \Leftarrow \aform$ in $\asid$, we have $\alloc{\anatom} = \alloc{\aform}$ .
A sequent $\aform \vdashr \aseq$ is {\em \alloccompatible} if $\asid$ is \alloccompatible.
 \end{definition}

%

 \begin{lemma}
 \label{lem:alloccomp}
 There exists an algorithm which, for every sequent $\aform \vdashr \aseq$,  
 computes an equivalent 
 \alloccompatible  sequent $\aform' \vdashsid{\asid'} \aseq'$. Moreover, this algorithm runs in exponential time  
and $\widt{\aform' \vdashsid{\asid'} \aseq'} = \bigO(\widt{\aform \vdashr \aseq}^2)$.
  \end{lemma}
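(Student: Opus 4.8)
The goal is to transform an arbitrary established \pcSID $\asid$ into an equivalent \alloccompatible one, meaning that in every rule $\anatom \Leftarrow \aform$ the set $\alloc{\anatom}$ of allocated parameters matches $\alloc{\aform}$. The source of non-compatibility is that whether a parameter of $p$ gets allocated can depend on \emph{which} rule is used to unfold $p(\vec x)$: one rule may allocate $x_1$ (always, by progress) plus possibly more parameters via recursive calls, and different rules may allocate different subsets. The fix is to refine each predicate symbol $p$ by pairing it with the exact set $A \subseteq \interv{1}{\ar p}$ of its parameters that will be allocated in its unfoldings, obtaining symbols $p_A$, and to rewrite the rules so that each $p_A$ only has rules whose right-hand side allocates exactly the parameters indexed by $A$. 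Concretely, I would proceed in four steps.

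\textbf{Step 1: Compute the allocation sets.} For each $p \in \preds$, I want the set $\mathit{Alloc}(p)$ of subsets $A \subseteq \interv{1}{\ar p}$ such that some unfolding of $p(x_1,\dots,x_{\ar p})$ (into a predicate-free formula) allocates exactly $\{x_i \mid i \in A\}$ among the parameters. By progress, $1 \in A$ always. This is a least-fixed-point computation over the rules: if $p(x_1,\dots,x_n) \Leftarrow \exists \vec u.~x_1 \mapsto (\dots) * q^{(1)}(\vec z^{(1)}) * \cdots * q^{(k)}(\vec z^{(k)}) * \atform$, then for each choice of $A_j \in \mathit{Alloc}(q^{(j)})$, the induced allocation set on $p$'s parameters is $\{1\} \cup \{ i \in \interv{1}{n} \mid x_i = z^{(j)}_t \text{ for some } j \text{ and some } t \in A_j\}$ (existential variables being projected away). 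Iterating this to a fixpoint terminates since there are finitely many (predicate, subset) pairs. Note the number of such pairs is exponential in $\ar p$, hence in $\widt\asid$, which is where the exponential blowup comes from.

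\textbf{Step 2: Refine the signature and the rules.} Introduce a fresh predicate $p_A$ of arity $\ar p$ for each $p$ and each $A \in \mathit{Alloc}(p)$. For a rule of $p$ as above and each \emph{compatible} tuple $(A_1,\dots,A_k)$ with $A_j \in \mathit{Alloc}(q^{(j)})$, produce the rule $p_A(x_1,\dots,x_n) \Leftarrow \exists \vec u.~ x_1 \mapsto (\dots) * q^{(1)}_{A_1}(\vec z^{(1)}) * \cdots * q^{(k)}_{A_k}(\vec z^{(k)}) * \atform$, where $A$ is the induced set computed in Step 1. By construction $\alloc{p_A(\vec x)} = A$ equals the allocation set of the right-hand side, so the resulting \pcSID $\asid'$ is \alloccompatible; progress and connectivity are inherited verbatim since only predicate symbols were renamed, and establishment is preserved because the set of predicate-free unfoldings (as formulas, after forgetting subscripts) is unchanged. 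In the sequent, replace each predicate atom $p(\vec t)$ in $\aform$ and in $\aseq$ by the disjunction $\bigvee_{A \in \mathit{Alloc}(p)} p_A(\vec t)$ (which, since $\mathit{Alloc}(p)$ partitions the models of $p$ by allocation pattern, is equivalent to $p(\vec t)$), then push disjunctions to obtain the required form if one insists on disjunction-freeness — here $\aseq$ is already a list so disjunctions on the right are free, and on the left one distributes.

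\textbf{Step 3: Correctness.} Prove by induction on the size of the heap that $(\astore,\aheap) \modelssid{\asid'} p_A(\vec t)$ iff $(\astore,\aheap) \modelsr p(\vec t)$ and $\alloc$-set of the witnessing unfolding is $A$; more precisely, that every $\asid$-model of $p(\vec t)$ is an $\asid'$-model of exactly one $p_A(\vec t)$ and conversely. This hinges on the fixpoint characterization of $\mathit{Alloc}$ and on the fact that, in a model, the parameters allocated are determined by the chosen unfolding tree. Equivalence of the two sequents then follows: a structure satisfies the rewritten left-hand side iff it satisfies the original one, and likewise refutes the rewritten right-hand list iff it refutes the original, so countermodels are in bijection.

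\textbf{Step 4: Complexity.} The number of symbols $p_A$ is $\sum_p \card{\mathit{Alloc}(p)} \le \card{\preds} \cdot 2^{\widt\asid}$, and each original rule spawns at most $(2^{\widt\asid})^{\text{(number of predicate atoms)}}$ new rules — still exponential overall, so $\asid'$ has exponential size and the whole construction runs in exponential time. Crucially, each individual new rule has the \emph{same shape} as the old one with only subscripted predicate names, and a subscript $p_A$ can be written as $p$ followed by a bit-string of length $\ar p \le \widt\asid$; hence $\size{\rho'} = \bigO(\widt\asid \cdot \size\rho) = \bigO(\widt\asid^2)$ for each rule, and similarly for the atoms in the sequent after distributing the disjunction on the left (each disjunct has width $\bigO(\widt{}^2)$ and free-variable count is unchanged), giving $\widt{\aform' \vdashsid{\asid'} \aseq'} = \bigO(\widt{\aform \vdashr \aseq}^2)$ as claimed.

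\textbf{Main obstacle.} The delicate point is Step 1 / Step 3: arguing that the fixpoint $\mathit{Alloc}(p)$ really captures \emph{exactly} the allocation patterns of predicate-free unfoldings, including the interaction with existentially quantified variables (which may be aliased to parameters — but here establishment and the injective-countermodel convention help, since an existential variable equal to a parameter $x_i$ is allocated iff $x_i$ is) and the need to handle all compatible tuples of subsets for the recursive calls so that no pattern is lost and none is spuriously added. Once the fixpoint characterization is nailed down, the rest is routine renaming and bookkeeping.
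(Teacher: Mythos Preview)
Your proposal is correct and follows essentially the same approach as the paper: refine each predicate $p$ into copies $p_A$ indexed by allocation subsets $A\subseteq\interv{1}{\ar p}$, generate the corresponding rules so that each $p_A$ is \alloccompatible, replace predicate atoms in the sequent by disjunctions over the $p_A$, and prove equivalence by induction on the satisfiability relation. The only minor difference is that you precompute the \emph{reachable} allocation sets $\mathit{Alloc}(p)$ via a least fixpoint (your Step~1), whereas the paper skips this and simply enumerates all subsets $A$, keeping exactly those rules $p_A(\vec x)\Leftarrow\aformB$ with $\aformB\in\expl{\aform}$ that are \alloccompatible --- this avoids the fixpoint analysis at the cost of possibly generating dead predicates, but yields the same sequent up to unreachable symbols and the same width and time bounds.
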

  \begin{proof}
 We associate all pairs $(p,A)$ where $p \in \preds$ and $A \subseteq \interv{1}{\ar{p}}$ with fresh, pairwise distinct predicate symbols $p_A \in \preds$, with the same arity as $p$, and we set $\alloc{p_A} = A$.
 For each disjunction-free  formula $\aform$, we denote by $\expl{\aform}$ the set of formulas obtained from $\aform$ by replacing every predicate atom
 $p(\vec{x})$ by an atom $p_A(\vec{x})$ with $A \subseteq \interv{1}{\ar{p}}$.
 Let $\asid'$ be the set of \alloccompatible rules of the form 
 $p_A(\vec{x}) \Leftarrow \aformB$, where $p(\vec{x}) \Leftarrow \aform$ is a rule in $\asid$
and $\aformB \in \expl{\aform}$. Note that the symbols 
$p_A$ may be encoded by words of length $\bigO(\len{p} + \ar{p})$, thus for every $\aformB \in \expl{\aform}$ we have $\widt{\aformB} = \bigO(\widt{\aform}^2)$, hence $\widt{\asid'} = \bigO(\widt{\asid}^2)$. 
We show by induction on the satisfiability relation that the following equivalence holds for every structure $(\astore,\aheap)$:
$(\astore,\aheap) \modelsr \aform$ iff there exists $\aformB \in \expl{\aform}$ such that
$(\astore,\aheap) \modelssid{\asid'} \aformB$. For the direct implication, we also prove that $\alloc{\aformB} = \{ x \in \fv{\aform} \mid \astore(x) \in \dom{\aheap}\}$.
\begin{compactitem}
\item{The proof is immediate if $\aform$ is a \tformula, since $\expl{\aform} = \{ \aform \}$, and the truth value of $\aform$ does not depend on the considered \pcSID. Also, by definition $\alloc{\aform} = \emptyset$ and all the models of $\aform$ have empty heaps.}
\item{If $\aform$ is of the form $x \mapsto (y_1,\dots,y_n)$, then $\expl{\aform} = \{ \aform \}$ and the truth value of $\aform$ does not depend on the considered \pcSID. Also, $\alloc{\aform} = \{ x \}$ and
we have $\dom{\aheap} = \{ \astore(x) \}$ for every model $(\astore,\aheap)$ of $\aform$.}
\item{Assume that $\aform = p(x_1,\dots,x_{\ar{p}})$. If $(\astore,\aheap) \modelsr \aform$ then
there exists a formula $\aformC$ such that $\aform \unfoldto{\asid} \aformC$ and
$(\astore,\aheap) \modelsr \aformC$. 
By the induction hypothesis, 
there exists $\aformB \in \expl{\aformC}$ such that 
$(\astore,\aheap) \modelssid{\asid'} \aformB$ and $\alloc{\aformB} = \{ x\in \fv{\aformC}  \mid  \astore(x) \in \dom{\aheap} \}$. Let $A = \{ i \in \interv{1}{\ar{p}} \mid \astore(x_i) \in \dom{\aheap} \}$, so that $\alloc{\aformB} =\{ x_i \mid i \in A \}$.
By construction 
$p_A(x_1,\dots,x_n) \Leftarrow \aformB$ is \alloccompatible, and 
therefore $p_A(x_1,\dots,x_n) \unfoldto{\asid'} \aformB$, which entails that $(\astore,\aheap) \modelssid{\asid'} p_A(x_1,\dots,x_n)$.
By definition of $A$, $\alloc{p_A(x_1,\dots,x_n)} = \{ x\in \fv{\aform}  \mid  \astore(x) \in \dom{\aheap} \}$.

Conversely, assume that $(\astore,\aheap) \modelsr \aformB$ for some $\aformB \in \expl{\aform}$. 
Necessarily $\aformB$ is of the form $p_A(x_1,\dots,x_n)$ with $A \subseteq \interv{1}{\ar{p}}$.
We have $p_A(x_1,\dots,x_n) \unfoldto{\asid'} \aformB'$ and $(\astore,\aheap) \modelsr \aformB'$ for some formula $\aformB'$.
By definition of $\asid'$, we deduce that $p(x_1,\dots,x_n) \unfoldto{\asid} \aformC$, for some $\aformC$ such that $\aformB\in \expl{\aformC}$. 
 By the induction hypothesis, 
$(\astore,\aheap) \modelsr \aformC$, thus $(\astore,\aheap) \modelsr p(x_1,\dots,x_{\ar{p}})$. Since $p(x_1,\dots,x_{\ar{p}}) = \aform$, we have the result.
}
\item{Assume that $\aform = \aform_1 * \aform_2$.
If $(\astore,\aheap) \modelsr \aform$ then there exist disjoint heaps $\aheap_1,\aheap_2$ such that $(\astore,\aheap_i) \modelsr \aform_i$, for all $i = 1,2$ and $\aheap = \aheap_1 \dunion \aheap_2$. By the induction hypothesis, this entails that there exist formulas $\aformB_i \in \expl{\aform_i}$ for $i = 1,2$ such that 
$(\astore,\aheap_i) \modelssid{\asid'} \aformB_i$ and  $\alloc{\aformB_i} = \{ x \in \fv{\aform_i} \mid \astore(x) \in \dom{\aheap_i} \}$.  
Let $\aformB = \aformB_1 * \aformB_2$.
It is clear that  $(\astore,\aheap) \modelssid{\asid'} \aformB_1 * \aformB_2$ and  $\alloc{\aformB} = \alloc{\aformB_1 * \aformB_2} = \alloc{\aformB_1} \cup \alloc{\aformB_2} = \{ x \in \fv{\aform_1}\cup \fv{\aform_2} \mid \astore(x) \in \dom{\aheap} \} =
\{ x \in \fv{\aform} \mid \astore(x) \in \dom{\aheap} \}$. Since $\aformB_1 * \aformB_2 \in \expl{\aform}$, we obtain the result.

Conversely, assume that  there exists $\aformB \in \expl{\aform}$ such that
$(\astore,\aheap) \modelssid{\asid'} \aformB$. 
Then $\aformB = \aformB_1 * \aformB_2$ with $\aformB_i \in \expl{\aform_i}$, and we have
$(\astore,\aheap_i) \modelssid{\asid'} \aformB_i$, for $i = 1,2$ with $\aheap = \aheap_1 \dunion \aheap_2$.
Using the induction hypothesis,  we get that $(\astore,\aheap_i) \modelsr \aform_i$, hence
 $(\astore,\aheap) \modelsr \aform$.}

\item{Assume that $\aform = \exists y. \aformC$.
If $(\astore,\aheap) \modelsr \aform$ 
then $(\astore',\aheap) \modelsr \aformC$, for some store $\astore'$ coinciding with $\astore$ on every variable distinct from $y$.
By the induction hypothesis, this entails that  
there exists $\aformB \in \expl{\aformC}$ such that
$(\astore',\aheap) \modelssid{\asid'} \aformB$ and  $\alloc{\aformB} = \{ x \in \fv{\aformC} \mid \astore'(x) \in \dom{\aheap}\}$. Then 
$(\astore,\aheap) \modelssid{\asid'} \exists y. \aformB$, and we have $\exists y.\aformB \in \expl{\aform}$.
Furthermore,  $\alloc{\exists y.\aformB} = \alloc{\aformB} \setminus \{ y\} = \{ x \in \fv{\aformC} \setminus \{ y \} \mid \astore'(x) \in \dom{\aheap}\} = \{ x \in \fv{\aform} \mid \astore(x) \in \dom{\aheap}\}$. 

Conversely, assume that 
$(\astore,\aheap) \modelsr \aformB$, with $\aformB\in \expl{\aform}$.
Then $\aformB$ is of the form $\exists y. \aformB'$, with $\aformB' \in \expl{\aformC}$, thus
there exists a store $\astore'$, coinciding with $\astore$ on all variables other than $y$ 
such that $(\astore',\aheap) \modelsr \aformB'$.
By the induction hypothesis, this entails that
$(\astore',\aheap) \modelsr \aformB$, thus 
$(\astore,\aheap) \modelsr \exists y. \aformC$. Since $\exists y. \aformC = \aform$, we have the result.
} 
\end{compactitem}
 
Let $\aform',\aseq'$ be the sequence of formulas obtained from $\aform,\aseq$ by replacing every atom $\anatom$ by the disjunction of all the formulas in $\expl{\anatom}$.
It is clear that $\widt{\aform' \vdash_{\asid'} \aseq'} \leq \widt{\aform \vdash_{\asid} \aseq}^2$.
By the previous result, $\aform' \vdashsid{\asid'} \aseq'$ is equivalent to $\aform \vdashsid{\asid} \aseq$, hence $\aform' \vdashsid{\asid'} \aseq'$  fulfills all the required properties. Also, since each predicate $p$ 
is associated with $2^{\ar{p}}$  predicates $p_A$, it is clear that $\aform' \vdashsid{\asid'} \aseq'$ can be computed in time $\bigO(2^{\size{\aform \vdashsid{\asid} \aseq}})$. 
\qed
\end{proof}

\newcommand{\purelyspatial}{\constrained{\emptyset}}
\newcommand{\constrained}[1]{$#1$-constrained\xspace}
\newcommand{\nespatial}{\constrained{\{ \not \iseq \}}}
\newcommand{\espatial}{\constrained{\{ \iseq, \not \iseq \}}}
\newcommand{\mroot}{main root\xspace}
\begin{definition}
Let $P \subseteq \tpreds$.
A formula $\aform$ is {\em \constrained{P}} if
for every formula $\aformB$ such that $\aform \unfoldto{\asid} \aformB$, and for every symbol  $p \in \tpreds$ occurring in $\aformB$, we have $p \in P$.
A sequent $\aform \vdashr \aseq$ is {\em \constrained{P}} if
all the formulas in $\aform,\aseq$ are \constrained{P}.
\end{definition}

 
\begin{theorem}
\label{theo:elimeq}
Let $P \subseteq \tpreds$.
There exists an algorithm that transforms every  \constrained{P} established sequent $\aform \vdashr \aseq$
into an equivalent \constrained{(P \setminus \{ \iseq, \not \iseq \})} established sequent $\aform' \vdashsid{\asid'} \aseq'$.
This algorithm runs in  exponential time
and $\widt{\aform' \vdashsid{\asid'} \aseq'}$ is 
polynomial w.r.t.\  $\widt{\aform \vdashr \aseq}$. 
\end{theorem}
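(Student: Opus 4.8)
The goal is to eliminate the equality and disequality predicates $\iseq, \not\iseq$ from an established sequent while preserving equivalence. My plan is to proceed in two stages, first removing equalities, then disequalities, and then show the whole pipeline runs in exponential time with only polynomial blow-up in width.

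\textbf{Step 1: reduce to an \alloccompatible\ sequent.} First I would apply Lemma~\ref{lem:alloccomp} to replace the given sequent by an equivalent \alloccompatible\ one, at the cost of an exponential blow-up in the number of rules but only a quadratic blow-up in width. This is crucial for the disequality elimination step: once the set of allocated free-variable positions of each predicate atom is fixed and independent of the unfolding, I can soundly replace a disequation $x \not\iseq y$ by the requirement that $x$ and $y$ are allocated in disjoint parts of the heap — since $\astore$ is injective on countermodels, two allocated locations in disjoint heaps are automatically distinct, and conversely I can always arrange (using fresh dummy points-to atoms like the $P$ predicate used in the undecidability proof) to allocate the two variables disjointly whenever they are genuinely distinct.

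\textbf{Step 2: eliminate equalities by instantiation.} An equation $x \iseq y$ occurring (possibly after unfolding) in a disjunction-free formula forces $\astore(x) = \astore(y)$; since countermodels have injective stores, this can only be satisfied if $x$ and $y$ are literally the same variable. So the plan is: for each rule and each formula, enumerate the (finitely many) equivalence relations on its free and existential variables that are consistent with the equations appearing in it, and for each such relation produce a copy of the rule/formula in which the variables of each class are identified with a chosen representative and the equation atoms are deleted (an inconsistent class collapses the formula to $\myfalse$, which is kept in $\tpreds$ and is harmless). Because a rule has boundedly many variables (bounded by its width), there are only exponentially many such relations per rule, so the number of rules grows exponentially but the width of each individual rule grows only polynomially. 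The establishment condition must be re-examined here: identifying variables can only shrink the set of existentials and preserves the property that each existential is eventually allocated via a chain of equalities, so establishment is preserved. Connectivity and progress are obviously preserved since identifying variables does not change the points-to structure.

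\textbf{Step 3: eliminate disequalities via allocation.} After Steps 1--2 the only remaining theory atoms from $\{\iseq,\not\iseq\}$ are disequations $x \not\iseq y$. Using \alloccompatibility, for each such atom I would modify the rules so that $x$ and $y$ are forced to lie in the allocated domain of disjoint sub-heaps — concretely, add auxiliary points-to atoms allocating fresh locations pointed to by $x$ and $y$ respectively (as with the $P(\cdot,\nil)$ gadget), and drop the disequation. One subtlety is that $x$ or $y$ may already be allocated elsewhere; here \alloccompatibility\ is what lets me determine statically, for every unfolding, whether a variable is allocated, so I can consistently choose how to discharge each disequation. Soundness in one direction uses injectivity of the store; completeness requires showing that any countermodel can be rearranged (by choosing suitable fresh locations for the dummy cells) to satisfy the modified, disequation-free sequent.

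\textbf{Main obstacle.} I expect the hard part to be Step~3, and in particular the interaction between disequations and \emph{shared} variables that are simultaneously constrained by a disequation and allocated (or forced unallocated) somewhere in an unfolding: the naive "allocate both sides disjointly" trick can conflict with existing allocation demands, and one must be careful that the rewriting is done uniformly across all rules reachable by unfolding (this is exactly why \alloccompatibility\ is imposed first). A secondary technical point is bookkeeping the complexity: one must verify that Steps~1--3 compose to a single exponential-time procedure and that width stays polynomial throughout — Step~1 squares the width, Step~2 adds a polynomial amount, and Step~3 adds a bounded number of points-to atoms per disequation, so the final width is polynomial in $\widt{\aform \vdashr \aseq}$ as claimed. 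Finally I would check that the construction does not reintroduce any theory predicate outside $P \setminus \{\iseq,\not\iseq\}$ (the only new atoms are points-to atoms and, at worst, $\myfalse$), so the resulting sequent is \constrained{(P \setminus \{\iseq,\not\iseq\})} and established, completing the proof.
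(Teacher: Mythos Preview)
Your high-level decomposition (alloc-compatibility, then instantiate away equalities, then discharge disequalities via allocation) matches the paper's strategy, and your Step~2 is essentially the paper's Steps~2--3. But Step~3 of your plan has a real gap.

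The paper does \emph{not} add auxiliary points-to atoms per disequation. Instead it exploits the following observation, which your plan does not articulate: after equality elimination, establishment guarantees that \emph{every existential} variable in any unfolding is literally the root of some points-to atom (the chain of equations collapses to $\emp$), so the only possibly-unallocated variables in any model are the \emph{free} variables of the sequent. The paper therefore (i) first threads all free variables of the sequent as extra parameters through every predicate (your plan omits this preparatory step), (ii) adds one extra heap component so that a distinguished tuple $(u',\dots,u')$ can never occur in any genuine heap cell, and (iii) at the \emph{top level} of every formula on both sides of the sequent, conjoins $x \mapsto (u',\dots,u')$ for each free variable $x \notin \alloc{\aformC}$. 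Now every variable in every unfolding is allocated, so every disequation $x \not\iseq y$ with $x \neq y$ is a consequence of $x \mapsto \vec{y}_1 * y \mapsto \vec{y}_2$ and can simply be dropped.

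Your per-disequation scheme runs into the difficulty you yourself flag but do not resolve: if $x$ occurs in several disequations, or is already allocated in some unfoldings but not others, adding a fresh $P(x,\nil)$ gadget locally will either double-allocate $x$ or change the heap asymmetrically between left and right sides of the sequent, breaking equivalence. The paper's top-level, once-per-free-variable allocation avoids all of this, and crucially relies on step~(i) so that the set of free variables is visible at every predicate call site, and on step~(ii) so that the dummy cells cannot be confused with genuine heap content. A minor second point: your sentence ``since countermodels have injective stores, this can only be satisfied if $x$ and $y$ are literally the same variable'' is wrong as stated---injectivity constrains only the \emph{free} variables, which is exactly why the enumeration of equivalence relations over existentials (which you do propose) is needed.
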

\begin{proof}
We consider a \constrained{P}  established sequent $\aform \vdashsid{\asid} \aseq$.
 This sequent is transformed in several steps.
 
\noindent\textbf{Step 1.}
The first step  consists in transforming all the formulas in $\aform,\aseq$ into disjunctions of symbolic heaps. 
Then for every  symbolic heap $\aformC$ occurring in the obtained sequent, we add all the variables freely occurring
in $\aform$ or $\aseq$ as parameters of every predicate symbol occurring in unfoldings of $\aformC$  (their arities are updated accordingly, and these variables are passed as parameters to each recursive call of a predicate symbol).
We obtain an equivalent sequent $\aform_1 \vdashsid{\asid_1} \aseq_1$, and if 
$v = \card{\fv{\aform} \cup \fv{\aseq}}$
denotes the total number of free variables occurring in $\aform,\aseq$, then it is easy to check (since the size of each of these variables is bounded by $\widt{\aform \vdashsid{\asid} \aseq}$) that 
$\widt{\aform_1 \vdashsid{\asid_1} \aseq_1} \leq v\cdot \widt{\aform \vdashsid{\asid} \aseq}^2$.
By Definition \ref{def:sequent},
we have $v \leq \widt{\aform \vdashsid{\asid} \aseq}$,
thus $\widt{\aform_1 \vdashsid{\asid_1} \aseq_1} = \bigO(\widt{\aform \vdashsid{\asid} \aseq}^3)$.


\noindent\textbf{Step 2.}
All the equations involving an existential variable can be eliminated in a straightforward way 
by replacing each formula of the form $\exists x. (x \iseq y * \aform)$ with 
$\repl{\aform}{x}{y}$.
We then replace every formula $\exists \vec{y}. \aform$ with free variables $x_1,\dots,x_n$ by the disjunction of all the formulas
of the form $\exists\vec{z}. \aform\sigma * \bigAnd_{z\in \vec{z}, z' \in \vec{z} \cup \{ x_1,\dots,x_n \}, z\not = z'} z \not \iseq z'$ , where $\sigma$ is a substitution such that $\dom{\sigma} \subseteq \vec{y}$, $\vec{z} = \vec{y}  \setminus \dom{\sigma}$ and $\img{\sigma} \subseteq \vec{y} \cup \{ x_1,\dots,x_n\}$. 
Similarly we replace every rule $p(x_1,\dots,x_n)\Leftarrow \exists \vec{y}. \aform$ by the
the set of rules $p(x_1,\dots,x_n)\Leftarrow \exists\vec{z}. \aform\sigma * \bigAnd_{z\in \vec{z}, z' \in \vec{z} \cup \{ x_1,\dots,x_n \}, z\not = z'} z \not \iseq z'$, where $\sigma$ is any substitution satisfying the conditions above.  
Intuitively, this transformation ensures that all existential variables are associated to pairwise distinct locations, also distinct from any location associated to a free variable. The application of the substitution $\sigma$ captures all the rule instances for which this condition does not hold, by mapping  all variables that are associated with the same location to a unique representative.  
 We denote by $\aform_2 \vdashsid{\asid_2} \aseq_2$ the sequent thus obtained.
Let $v'$ be the maximal number of existential variables occurring in a rule in $\asid$. 
We have $v' \leq \widt{\aform \vdashsid{\asid} \aseq}$ (since the transformation in Step $1$ adds no existential variable).
Since at most one disequation is added for every pair of variables, and since the size of every variable is bounded by $\widt{\aform \vdashsid{\asid} \aseq}$, it is clear that $\widt{\aform_2 \vdashsid{\asid_2} \aseq_2} = \widt{\aform_1 \vdashsid{\asid_1} \aseq_1} + v'\cdot (v+v') \cdot(1+2*\widt{\aform \vdashsid{\asid}   \aseq})
= \bigO(\widt{\aform \vdashsid{\asid} \aseq}^3)$.

\noindent\textbf{Step 3.}
We replace every atom $\anatom = p(x_1,\dots,x_n)$ occurring in $\aform_2, \aseq_2$ or $\asid_2$ with pairwise distinct  variables $x_{i_1},\dots,x_{i_m}$ (with $m \leq n$ and $i_1 = 1$), by an atom $p_{\anatom}(x_{i_1},\dots,x_{i_m})$, where $p_{\anatom}$ is a fresh predicate symbol, associated with rules of the form 
$p_{\anatom}(y_{i_1},\dots,y_{i_m}) \Leftarrow \replall{\aformB}{y_i}{x_i}{i \in \interv{1}{n}}\theta$, where 
$p(y_1,\dots,y_n) \Leftarrow \aformB$ is a rule in $\asid$
and $\theta$ denotes the substitution $\replall{}{x_{i_k}}{y_{i_k}}{i \in \interv{1}{m}}$. 
By construction,  $p_{\anatom}(x_{i_1},\dots,x_{i_m})$ is equivalent to $\anatom$.  We denote by $\aform_3 \vdashsid{\asid_3} \aseq_3$ the resulting sequent.
It is clear that $\aform_3 \vdashsid{\asid_3} \aseq_3$ is equivalent to $\aform \vdashsid{\asid} \aseq$.

By a straightforward induction on the derivation, we can show that 
all atoms occurring in an unfolding of the formulas in the sequent $\aform_3 \vdashsid{\asid_3} \aseq_3$ are of the form $q(y_1,\dots,y_{\ar{q}})$, where $y_1,\dots,y_{\ar{q}}$ are pairwise distinct, and that the considered unfolding also contains the disequation $y_i \not \iseq y_j$, for all $i \not = j$ 
such that either $y_i$ or $y_j$ is an existential variable (note that if  $y_i$ and $y_j$ are both free then $y_i \not \iseq y_j$ is valid, since the considered stores are injective). 
This entails that the rules that introduce a trivial equality $u \iseq v$ with $u \not = v$ 
are actually redundant, since unfolding any atom  $q(y_1,\dots,y_{\ar{q}})$ using such a rule yields a formula that is unsatisfiable. Consequently such rules can be eliminated without affecting the status of the sequent. 
All the remaining equations are of form $u \iseq u$ hence can be replaced by $\emp$.
We may thus assume that the sequent $\aform_3 \vdashsid{\asid_3} \aseq_3$ contains no equality. 
Note that by the above transformation all existential variables must be interpreted as pairwise distinct locations in any interpretation, and also be distinct from all free variables. 
It is easy to see that the fresh predicates $p_{\anatom}$ may be encoded by words
of size at most $\widt{\aform \vdashsid{\asid} \aseq}$, thus
$\widt{\aform_3 \vdashsid{\asid_3} \aseq_3} \leq \widt{\aform \vdashsid{\asid} \aseq} \cdot\widt{\aform_2 \vdashsid{\asid_2} \aseq_2} = \bigO(\widt{\aform \vdashsid{\asid} \aseq}^4)$.
By Lemma \ref{lem:alloccomp}, we may assume that
$\aform_3 \vdashsid{\asid_3} \aseq_3$ is \alloccompatible (note that the transformation given in the proof of Lemma \ref{lem:alloccomp} does not affect the disequations occurring in the rules).

\noindent\textbf{Step 4.}
We now ensure that all the  locations that are referred to are allocated.
Consider a symbolic heap $\aformC$ occurring in $\aform_3,\aseq_3$ and any $\asid_3$-model $(\astore,\aheap)$ of $\aformC$, where $\astore $ is injective.
For the establishment condition to hold, the only unallocated locations in $\aheap$ of $\aformC$ must correspond to locations $\astore(x)$ where $x$ is a free variable. 
We assume the sequent contains a free variable $u$ such that,
 for every tuple $(\ell_0,\dots,\ell_\rank)\in \aheap$, we have $\astore(u) = \ell_\rank$.
This does not entail any loss of generality, since 
we can always add a fresh variable $u$ to the considered problem:
after Step $1$, $u$ is passed as a parameter to all predicate symbols, and we may replace every points-to atom $z_0 \mapsto (z_1,\dots,z_\rank)$ occurring in $\aform_3$, $\aseq_3$ or $\asid_3$, by
$z_0 \mapsto (z_1,\dots,z_\rank,u)$ (note that this increases the value of $\rank$ by $1$).
It is clear that this ensures that $\aheap$ and $u$ satisfy the above property.
We also assume, w.l.o.g., that the sequent contains at least one variable $u'$ distinct from $u$. Note that, since $\astore$ is injective, the tuple $(\astore(u'),\dots,\astore(u'))$ cannot occur in $\aheap$, because its last component is distinct from $\astore(u)$.
We then denote by $\aform_4 \vdashsid{\asid_4} \aseq_4$ the sequent obtained from 
$\aform_3\vdashsid{\asid_3} \aseq_3$ by replacing every symbolic heap 
$\aformC$ in $\aform_3,\aseq_3$ by 
$\left(\bigAnd_{x \in (\fv{\aform_3} \cup \fv{\aseq_3}) \setminus \alloc{\aformC}} x \mapsto (u',\dots,u')\right) * \aformC$
It is straightforward to check that $(\astore,\aheap)\models \aformC$ iff there exists an extension $\aheap'$ of $\aheap$
such that 
$(\astore,\aheap') \models \left(\bigAnd_{x \in (\fv{\aform_3} \cup \fv{\aseq_3}) \setminus \alloc{\aformC}} x \mapsto (u',\dots,u')\right) * \aformC$, 
with $\locs{\aheap} = \locs{\aheap'} = \dom{\aheap'}$ and
 $\aheap'(\ell) = (\astore(u'),\dots,\astore(u'))$ for all $\ell \in \dom{\aheap'} \setminus \dom{\aheap}$. 
This entails that $\aform_4 \vdashsid{\asid_4} \aseq_4$ is valid if and only if	$\aform \vdashsid{\asid} \aseq$ is valid.

Consider a formula $\aformC$ in  $\aform_4,\aseq_4$  and some  unfolding $\aformC'$ of $\aformC$.
Thanks to the transformation in this step and the establishment condition, if $\aformC'$  contains a (free or existential) variable $x$ then it also contains an atom $x' \mapsto \vec{y}$ and 
a separating conjunction of equations $\atform$ such that $\atform \modelst x \approx x'$. 
Since all equations have been removed, $\atform = \emp$, thus 
$x = x'$.
Consequently, if $\aformC'$ contains a disequation $x_1 \not \approx x_2$ with $x_1\not = x_2$, then it also contains atoms $x_1 \mapsto \vec{y}_1$ and $x_2 \mapsto \vec{y}_2$. This entails that the disequation $x_1\not \iseq x_2$ is redundant, since it is a logical consequence of $x_1 \mapsto \vec{y}_1 * x_2 \mapsto \vec{y}_2$. We deduce that the satisfiability status of $\aform_4 \vdashsid{\asid_4} \aseq_4$ is preserved if all disequations are replaced by $\emp$.
\qed
\end{proof}


\begin{example}\label{ex:elimeq}
	We illustrate all of the steps in the proof above. 
	\begin{description}
		\item[Step 1.] Consider the sequent $p(x_1, x_2) \vdashsid{\asid} r(x_1) * r(x_2)$, where $\asid$ is defined as follows: $\asid = \myset{r(x) \Leftarrow x \mapsto (x)}$. After Step 1 we obtain the sequent $p(x_1, x_2) \vdashsid{\asid_1} r'(x_1, x_2) * r'(x_2, x_1)$, where $\asid_1 = \myset{r'(x, y) \Leftarrow x \mapsto (x)}$.
		\item[Step 2.] This step transforms the formula $\exists y_1\exists y_2.\, p(x, y_1) * p(x, y_2)$ into the disjunction: 
		\[\begin{array}{rl}
			\exists y_1,y_2.\, p(x, y_1) * p(x, y_2) * y_1 \not\iseq y_2 * y_1 \not \iseq x * y_2 \not \iseq x & \vee\\
			\exists y_2.\, p(x, x) * p(x,y_2) * y_2 \not \iseq x & \vee\\
			\exists y_1.\, p(x, y_1) * p(x,x) * y_1 \not \iseq x& \vee\\
			p(x,x) * p(x, x)			
		\end{array}\]
		Similarly, the rule 
		$p(x) \leftarrow \exists z \exists u.~ x \mapsto (z) * q(z,u)$ is transformed into the set:
		\[
		\begin{array}{lll}
		p(x) & \leftarrow & x \mapsto (x) * q(x,x) \\
		p(x) & \leftarrow & \exists z.~ x \mapsto (z) * q(z,x) * z \not \iseq x  \\
		p(x) & \leftarrow & \exists u.~ x \mapsto (x) * q(x,u) * u \not \iseq x  \\
		p(x) & \leftarrow & \exists z \exists u.~ x \mapsto (z) * q(z,u) * z \not \iseq x * u \not \iseq x  * z \not \iseq u \\
\end{array}
\]
		\item[Step 3.] 
		Assume that $\asid$ contains the rules $p(y_1, y_2, y_3) \Leftarrow y_1\mapsto (y_2) * q(y_2, y_3) * y_1 \iseq y_3$ and $p(y_1, y_2, y_3) \Leftarrow y_1\mapsto (y_2) * r(y_2, y_3) * y_1 \iseq y_2$ and consider the sequent $p(x,y,x) \vdashsid{\asid} \emp$. Step 3 generates the sequent $p_\alpha(x,y) \vdashsid{\asid'} \emp$ (with $\alpha = p(x,y,x)$) where  
		$\asid'$ contains the rules $p_\alpha(y_1, y_2) \Leftarrow y_1\mapsto (y_2) * q(y_2, y_1) * y_1 \iseq y_1$ and $p_\alpha(y_1, y_2) \Leftarrow y_1\mapsto (y_2) * r(y_2, y_1) * y_1 \iseq y_2$.
		The second rule is redundant, because $p_\alpha(y_1, y_2)$ is used only in a context where $y_1  \not \iseq y_2$ holds.

		\item[Step 4.] 
		Let $\aformC = p(x,y,z,z') * q(x,y,z,z') * z' \mapsto (z')$, assume $\alloc{\aformC} = \myset{x,z}$, and consider the sequent $\aformC \vdashsid{\asid} \emp$. Then  $\aformC$ is replaced by $p(x,y,z,z',u) * q(x,y,z,z',u) * z' \mapsto (z',u) * u \mapsto (x,x) * y \mapsto (x,x)$ (all non-allocated variables are associated with $(x,x)$, where $x$ plays the r\^ole of the variable $u'$ in Step $4$ above). Also, every points-to atom $z_0 \mapsto (z_1)$ in $\asid$ is replaced by $z_0 \mapsto (z_1,u)$. 
		
	\end{description}
\end{example}


\section{Discussion}


The presented undecidability result is very tight.
Theorem \ref{theo:undec} applies to most  theories 
and the proof only uses very simple data structures (namely simply linked lists).
The proof of Theorem \ref{theo:undec} could be adapted 
(at the cost of  cluttering the presentation) to  handle quantifier-free entailments and 
even simpler 
inductive systems  
with at most one predicate atom on the right-hand side of each rule, in the spirit of word automata.

Our logic has only one sort of variables, denoting locations, thus one cannot directly describe structures in which the heap maps locations to  tuples containing both locations and data, ranging over disjoint domains. This is actually not restrictive: indeed, data can be easily 
encoded  in our framework by considering a  non-injective function $\mathtt{d}(x)$ mapping locations to data, and adding theory
predicates constructed on this function, such as $\mathtt{d}(x) \iseq \mathtt{d}(y)$ to state that two (possibly distinct) locations
$x,y$ are mapped to the same element. The obtained theory falls within the scope of Theorem \ref{theo:undec} (using $\mathtt{d}(x) \iseq \mathtt{d}(y)$ as the relation $\succp(x,y)$), provided the domain of the data is infinite. This shows that 
entailments with data disjoint from locations are undecidable, even if the  theory only contains  equations and disequations, except when the data domain is finite.

  \subsubsection*{Acknowledgments.}
  
  This work has been partially funded by the 
the French National Research Agency ({\tt ANR-21-CE48-0011}).
  The authors wish to thank Radu Iosif for his comments on an earlier version of the paper and for fruitful discussions. 
    

\end{document}